\title{Some Results on Circuit Lower Bounds and Derandomization of Arthur-Merlin Problems}
\author{D. M. Stull\footnote{Research supported in part by National Science Foundation Grants 1247051 and 1545028.}}
\newtheorem{theorem}{Theorem}
\newtheorem{lemma}{Lemma}
\newtheorem{corollary}{Corollary}
\theoremstyle{definition}
\newtheorem*{definition}{Definition}
\newcommand{\sst}{\{0,1\}^*}
\newcommand{\sstn}{\{0,1\}^n}
\newcommand{\AM}{\mathsf{AM}}
\newcommand{\Poly}{\mathsf{P}}
\newcommand{\prAM}{\mathsf{prAM}}
\newcommand{\NPpoly}{\mathsf{NP}/poly}
\newcommand{\NP}{\mathsf{NP}}
\newcommand{\coNP}{\mathsf{coNP}}
\newcommand{\augAM}{\mathsf{AugAM}}
\newcommand{\coaugAM}{\mathsf{coAugAM}}
\newcommand{\prcoaugAM}{\mathsf{pr-coAugAM}}
\newcommand{\praugAM}{\mathsf{prAugAM}}
\newcommand{\avmpraugam}{\mathsf{prM}(\mathsf{AM}||\mathsf{coNP})}
\newcommand{\nsize}{\mathsf{NSIZE}}
\newcommand{\svsize}{\mathsf{SV}\mathsf{SIZE}}
\newcommand{\pspace}{\mathsf{PSPACE}}
\newcommand{\Ppoly}{\mathsf{P}/poly}
\newcommand{\SigmaP}{\Sigma_2\mathsf{P}}
\newcommand{\SigmaSub}{\Sigma_2\mathsf{SubEXP}}
\newcommand{\SigmaEXP}{\Sigma_2\mathsf{EXP}}
\newcommand{\SigmaE}{\Sigma_2\mathsf{E}}
\newcommand{\N}{\mathbb{N}}
\newcommand{\MA}{\mathsf{MA}}
\newcommand{\PiP}{\Pi_2\mathsf{P}}
\begin{document}

\maketitle
\begin{abstract}
We prove a downward separation for $\mathsf{\Sigma}_2$-time classes. Specifically, we prove that if $\SigmaE$ does not have polynomial size non-deterministic circuits, then $\SigmaSub$ does not have \textit{fixed} polynomial size non-deterministic circuits. To achieve this result, we use Santhanam's technique \cite{Santhanam09} on augmented Arthur-Merlin protocols defined by Aydinlio\u{g}lu and van Melkebeek \cite{AvM12}. We show that augmented Arthur-Merlin protocols with one bit of advice do not have fixed polynomial size non-deterministic circuits. We also prove a weak unconditional derandomization of a certain type of promise Arthur-Merlin protocols. Using Williams' easy hitting set technique \cite{Williams16}, we show that $\Sigma_2$-promise $\AM$ problems can be decided in $\SigmaSub$ with $n^c$ advice, for some fixed constant $c$.
\end{abstract}

\section{Introduction}
The power of non-uniform (i.e., circuit) models of computation is a central topic in theoretical computer science. In addition to being intrinsically interesting, proving circuit lower bounds for uniform classes has many important consequences. Indeed, proving that $\NP$ does not have polynomial size Boolean circuits would imply that $\Poly \neq \NP$. 

Circuit lower bounds also have strong connections with the derandomization of probabilistic complexity classes. The so called ``hardness vs. randomness" paradigm is based on the idea that if a language has high circuit complexity, we can use the language to derandomize probabilistic classes using pseudorandom generators. Babai, et al, \cite{BFNW93} used this idea for ``low-end" derandomization of $\mathsf{BPP}$. They showed that if $\mathsf{E}$ does not have polynomial size circuits, then $\mathsf{BPP}$ can be derandomized in subexponential time infinitely often. Subsequently, Impagliazzo and Wigderson \cite{IW97} gave a ``high-end" derandomization of $\mathsf{BPP}$. They proved that, if $\mathsf{E}$ does not have $2^{\epsilon n}$ size Boolean circuits, then $\mathsf{BPP} = \mathsf{P}$. We now know that, in certain settings, circuit lower bounds and derandomization are \textit{equivalent}. Impagliazzo et al, showed that any non-trivial derandomization of the class $\mathsf{MA}$ implies that $\mathsf{NEXP}$ does not have polynomial size deterministic circuits \cite{IKW02}. Kabanets and Impagliazzo subsequently proved that derandomizing the well known Polynomial Identity Testing problem is equivalent to circuit lower bounds \cite{KI04}.

Aydinlio\u{g}lu and van Melkebeek \cite{AvM12} have recently introduced \textit{augmented Arthur-Merlin protocols}  to extend the equivalence of circuit lower bounds and derandomization to \textit{non}-\textit{deterministic} circuits. Using augmented $\AM$, Aydinlio\u{g}lu and van Melkebeek showed that derandomizing promise $\AM$ in $\SigmaSub$ with $n^\epsilon$ bits of advice is equivalent to polynomial size non-deterministic circuit lower bounds for $\SigmaE$.

In this paper, we investigate non-deterministic circuit lower bounds of uniform classes. We prove that non-deterministic circuit lower bounds translate downward for $\mathsf{Sigma}_2$-time classes. Specifically, we show that if $\SigmaE$ does not have polynomial size non-deterministic circuits, then $\SigmaSub$ does not have \textit{fixed} polynomial size non-deterministic circuits. To prove this result, we give fixed polynomial size non-deterministic lower bounds for augmented Arthur-Merlin protocols, which may be of independent interest. To achieve this, we use a technique developed by Santhanam \cite{Santhanam09} to prove analogous results for $\mathsf{MA}$. 

While circuit lower bounds are notoriously hard to prove, there has been important progress in this direction. Kannan proved that $\SigmaP \cap \PiP$ does not have fixed polynomial size deterministic circuits \cite{Kannan82}. Subsequently, Kobler and Watanabe improved this lower bound holds for the weaker class $\mathsf{ZPP}^{\mathsf{NP}}$ \cite{KW98}. Cai was able to strengthen this further by showing that $\mathsf{S_2P}$ does not have fixed polynomial size deterministic circuits \cite{Cai07}. Vinodchandran proved  fixed $n^k$ circuit lower bounds for the class $\mathsf{PP}$. Santhanam, using tools from interactive proof protocols \cite{LFKN92}, \cite{Shamir92} and program checking \cite{BK95}, proved that $\mathsf{MA}$ with one bit of advice does not have fixed polynomial size deterministic circuits. There have been fewer unconditional lower bounds for \textit{non-deterministic} circuits. The smallest class known to have fixed polynomial size non-deterministic circuits is $\mathsf{S}_2\mathsf{P}^{\mathsf{NP}}$, which follows by relativizing Cai's result \cite{Cai07}. In this paper, we show that Santhanam's technique can be applied to the augmented Arthur-Merlin protocols of \cite{AvM12}. This improves the smallest class known to have fixed size non-deterministic circuits.

One of the principal interests in proving non-deterministic circuit lower bounds is the derandomization of $\AM$. The work of Klivans and van Melkebeek \cite{KvM02}; Shaltiel and Umans \cite{SU06} shows that derandomization of $\prAM$ follows from non-deterministic circuit lower bounds. Recently, progress has been made on achieving non-trivial derandomization of $\AM$ in $\Sigma_2$-time classes. Kabanets \cite{Kabanets01} using his ``easy witness" technique, and Gutfreund et al \cite{GST03}, gave unconditional derandomization of $\AM$ in pseudo-$\SigmaSub$. Williams', using his ``easy hitting set" technique, recently showed that $\AM$ is contained in $\SigmaSub$ with fixed $n^c$ advice \cite{Williams16}. In this paper, we investigate derandomization of \textit{promise} $\AM$. We use Williams easy hitting set technique to show that certain promise $\AM$ protocols can be unconditionally derandomized in $\SigmaSub$, with fixed $n^c$ bits of advice.

\section{Preliminaries}
We will assume familiarity with the complexity classes $\NP$, $\SigmaP$, $\pspace$ as well as their exponential- and subexponential-time counterparts. For a language $L$ and integer $n$, we denote the restriction of $L$ to $n$ by $L_{=n}$, consisting of all strings $x \in \sstn \cap L$. We denote the complement of a language $L$ by $\overline{L}$. For a language $L$ and a complexity class $\mathcal{C}$, we say that $L$ is infinitely often in $\mathcal{C}$, denoted $L \in$ i.o-$\mathcal{C}$, if there is a language $A \in \mathcal{C}$ such that for infinitely many $n \in \N$, $L_{=n} = A_{=n}$.

\subsection{Non-deterministic circuits}
A \textit{non-deterministic Boolean circuit} $C$ is a Boolean circuit which receives two inputs, $x$ of length $n$ and a second input $y$. We say that $C$ accepts input $x$ if there is a string $y$ such that $C(x, y) = 1$. Otherwise, we say that $C$ rejects $x$. The \textit{size} of a non-deterministic circuit is the number of its connections. For a constant $k \in \N$, the class $\nsize(n^k)$ consists of all languages $L$ for which there is a family of non-deterministic circuits $\{C_n\}_{n \in \N}$ such that $C_n$ decides $L_{=n}$ and $size(C_n) = n^k$. The class $\nsize(poly)$ is the union of $\nsize(n^k)$ over all constants $k \in \N$.

A \textit{partial single-valued nondeterministic (PSV) circuit} is a Boolean circuit $C$ which receives two inputs, $x$ of length $n$ and a second input $y$, and has two output gates, \textit{value} and \textit{flag}, so that the following holds for every $x \in \sstn$. 
\begin{enumerate}
\item For every $y_1, y_2$, if $C(x, y_1)$ and $C(x, y_2)$ have a 1 at their flag gate, then $C(x, y_1) = C(x, y_2)$.
\end{enumerate}
Circuit $C$ is a \textit{total single-valued (TSV) circuit} computing the function $f : \{0, 1\}^n\rightarrow\{0, 1\}$ if the following hold.
\begin{enumerate}
\item $C$ is a PSV circuit.
\item For every $x$, there exists some $y$ for which $C(x, y$) has 1 at its flag gate.
\end{enumerate}
For a constant $k$, the class of $n^k$ size \textit{single-valued non-deterministic circuits}, $\svsize(n^k)$, consists of all languages $L$ for which there is a family of TSV circuits $\{C_n\}_{n \in \N}$ such that $C_n$ decides $L_{=n}$ and $size(C_n) = n^k$. The class $\svsize(poly)$ is the union of $\svsize(n^k)$ over all constants $k \in \N$. Note that for any language $L$, if $L, \overline{L} \in \nsize(poly)$, then $L$ and $\overline{L}$ are in $\svsize(poly)$.

\subsection{Arthur-Merlin protocols}
Promise problems were first introduced and studied by  Even, Selman and Yacobi \cite{ESY84}. They have since been highly useful in complexity theory, and, in particular, probabilistic complexity classes. A \textit{promise problem} $\Pi = (\Pi_Y, \Pi_N)$ is a pair of disjoint sets $\Pi_Y$ and $\Pi_N$. A language $L$ \textit{agrees} with a promise problem $\Pi$ if 
\begin{enumerate}
\item $x \in L$ for every $x \in \Pi_Y$, and
\item $x \notin L$ for every $x \in \Pi_N$.
\end{enumerate}

The class of Promise Arthur-Merlin problems, $\prAM$, is the set of all promise problems $\Pi$ such that there is a polynomial time relation $R(\cdot, \cdot, \cdot)$ such that
\begin{align*}
x \in \Pi_Y  \implies & \Pr_z[(\exists y) \, R(x, y, z) = 1] \geq 2/3 \\
x \in \Pi_N  \implies & \Pr_z[(\exists y) \, R(x, y, z) = 1] \leq 1/3.
\end{align*} 
The class $\AM$ consists of the problems in $\prAM$ which are languages.

Augmented Arthur-Merlin protocols were introduced by Aydinlio\u{g}lu and van Melkebeek \cite{AvM12}. This definition is similar to $\AM$ protocols, except that there are two verifiers, Arthur and a $\coNP$ verifier $V$.  
\begin{definition}[Augmented Arthur-Merlin protocols]
The class of problems $\praugAM$\footnote{Aydinlio\u{g}lu and van Melkebeek originally denoted this class by $\avmpraugam$. We made this change for considerations of length.} consists of all promise problems $\Pi$ for which there is a constant $c$, a promise problem $\Gamma \in \prAM$ and a language $V \in \coNP$ such that
\begin{align*}
x \in \Pi_Y  \implies & (\exists y) (\langle x, y \rangle \in \Gamma_Y \land \langle x, y \rangle \in V),\\
x \in \Pi_N  \implies & (\forall y) (\langle x, y \rangle \in \Gamma_N \lor \langle x, y \rangle \notin V),
\end{align*}
where $x \in \sstn$ and $y \in \{0,1\}^{n^c}$. The class $\augAM$ consists of the problems in $\praugAM$ which are languages.
\end{definition}

\subsection{Pseudorandom Generators}
The $SAT$-\textit{relativized hardness} $H^{SAT}(G_{r, n})$ of a \textit{pseudorandom generator} $G_{r,n} : \{0, 1\}^r \rightarrow \{0, 1\}^{n}$ is defined as the minimal $s$ such that there exists an $n$-input $SAT$ oracle Boolean circuit C of size at most $s$ for which 
\begin{center}
$\vert \Pr_{x \in \{0,1\}^r}[C(G_{r, n}(x) = 1] - \Pr_{y \in \{0,1\}^n}[C(y) = 1] \vert \geq \frac{1}{s}$.
\end{center}

Klivans and van Melkebeek \cite{KvM02} showed that the pseudorandom generator constructions of \cite{BFNW93} and \cite{IW97} relativize. Specifically, they proved the following theorem.
\begin{theorem}\label{thm:nondeterPSG}
There is a polynomial-time computable function $F : \sst \times \sst \rightarrow \sst$ with the following properties. For every $\epsilon > 0$, there exist $c, d \in \N$ such that 
\begin{center}
$F : \{0,1\}^{n^c} \times \{0,1\}^{d \log n} \rightarrow \{0,1\}^n$,
\end{center}
and if $r$ is the truth table of a $c \log n$ variable Boolean function of $SAT$-oracle circuit complexity at least $n^c$, then the function $G_r(s) = F(r, s)$ is a pseudorandom generator with hardness $H^{SAT}(G_r) > n$.
\end{theorem}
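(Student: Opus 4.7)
The plan is to take the high-end pseudorandom generator construction of Impagliazzo and Wigderson \cite{IW97} (building on Babai et al.\ \cite{BFNW93}) and verify that every step of its hardness-to-randomness reduction relativizes with respect to a $SAT$ oracle. I would define $F(r,s)$ to be the Impagliazzo--Wigderson generator instantiated with the $c\log n$-variable Boolean function $f$ whose truth table is $r$, evaluated on seed $s$ of length $d\log n$. The forward computation of $F$ is just field arithmetic over $\mathbb{F}_{2^{O(\log n)}}$ followed by table look-ups into $r$, hence polynomial-time computable uniformly in $r$ and $s$, which handles the first clause of the theorem.

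The content is in the reconstruction: given any $SAT$-oracle circuit $C$ of size at most $n$ distinguishing $G_r$ from uniform with advantage at least $1/n$, produce a $SAT$-oracle circuit of size at most $n^c$ computing $f$. The Impagliazzo--Wigderson chain factors as three black-box reductions that I would verify in order. First, the Nisan--Wigderson reconstruction uses Yao's next-bit predictor to convert a distinguisher against the NW generator into a $(1/2 + 1/\mathrm{poly}(n))$-approximator for the truth table; it calls $C$ a polynomial number of times on inputs obtained by hard-coding bits of the NW design. Second, Impagliazzo's hard-core lemma together with Yao's XOR reduction converts such a mild approximator into a worst-case solver for a slightly easier function, again by repeatedly invoking the approximator on correlated inputs. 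Third, the worst-case-to-mild-average reduction uses the random self-reducibility of the low-degree polynomial extension of $f$: it queries a worst-case solver for the extension on random points along a line and interpolates, which is standard Berlekamp--Welch decoding over a small field.

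The critical observation is that each of these three reductions uses its adversary as an \emph{oracle}: it feeds in inputs, reads outputs, and combines them with deterministic gates plus hardcoded non-uniform advice of polynomial size. Consequently, if the adversary is an $SAT$-oracle circuit, the reconstructed circuit remains an $SAT$-oracle circuit, and the size blow-up is exactly as in the unrelativized case, which for the IW setting gives $n^{O(1)}$ with the hidden constant absorbed into $c$. Composing the three reductions and taking the contrapositive yields that if $r$ has no $SAT$-oracle circuit of size $n^c$, then no size-$n$ $SAT$-oracle circuit distinguishes $G_r$ from uniform by $1/n$, giving $H^{SAT}(G_r) > n$.

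The main obstacle is the bookkeeping: pinning down the correct $c$ and $d$ so that the polynomial size blow-ups in all three reductions compose to at most $n^c$ while the seed length remains $d\log n$. Choosing the hard function to have $c\log n$ variables already fixes the seed length of the NW design to $O(\log n)$, but one must carefully trace how the hard-core lemma and XOR lemma inflate the size parameter, and confirm that none of the internal steps (field arithmetic, Reed--Muller decoding, hard-core set extraction) introduces gates beyond standard Boolean gates plus $SAT$ oracle calls. This is precisely what Klivans and van Melkebeek \cite{KvM02} verify, so the plan ultimately reduces to invoking and carefully restating their relativized reconstruction.
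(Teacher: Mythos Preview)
The paper does not give its own proof of this theorem; it is stated as a cited result of Klivans and van Melkebeek \cite{KvM02}, with the surrounding text noting only that the pseudorandom generator constructions of \cite{BFNW93} and \cite{IW97} relativize. Your proposal is a correct high-level sketch of exactly that argument---verifying that the NW reconstruction, the XOR/hard-core amplification, and the low-degree worst-case-to-average-case reduction each treat the adversary as a black box and hence preserve $SAT$-oracle circuits---so it matches the paper's approach (invoke \cite{KvM02}) with the details unpacked.
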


Klivans and van Melkebeek showed that the existence of pseudorandom generators which are hard for $SAT$-oracle circuits derandomize $\prAM$ \cite{KvM02}.

\section{Non-deterministic Circuit Lower Bounds}
We now prove our downward separation for $\mathsf{\Sigma}_2$-time classes. We first show that $(\coaugAM \cap \augAM)/1$ does not have fixed polynomial size non-deterministic circuits. We will need the following lemma of Santhanam \cite{Santhanam09}, which builds on the ideas of Trevisan and Vadhan \cite{TV07} and Fortnow and Santhanam \cite{FS04}. 
\begin{lemma}\label{thm:SanthPSPACECompleteLanguage}
There is a $\pspace$-complete language $L$ and probabilistic polynomial-time oracle Turing machines $M$ and $M^\prime$ such that for any input x of length $n$ the following hold.
\begin{enumerate}
\item $M$ and $M^\prime$ only query their oracle on strings of length $n$.
\item If $M$ (resp. $M^\prime$) is given $L$ as its oracle and $x \in L$ (resp. $x \notin L$), then $M$ (resp. $M^\prime$) accepts with probability $1$.
\item If $x \notin L$ (resp. $x \in L$), then irrespective of the oracle, $M$ (resp. $M^\prime$) rejects with probability at least $2/3$.
\end{enumerate}
\end{lemma}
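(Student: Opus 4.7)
The plan is to combine Shamir's interactive proof for $\pspace$ with a paddable $\pspace$-complete language. The machines $M$ and $M'$ will simulate the IP verifier on inputs to $L$ and $\overline{L}$ respectively, using oracle queries to $L$ in place of Merlin's messages; careful padding will ensure that every such query has length exactly $n$.

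First, I would fix a $\pspace$-complete language $L_0$ (say $\mathsf{TQBF}$) together with Shamir's public-coin interactive proof $(P,V)$ for $L_0$, amplified in parallel to perfect completeness and soundness at most $1/3$. Shamir's protocol has the key property that, on any input $x$ and partial transcript $\tau$, the honest prover's next message is determined by a $\pspace$ predicate. I would then define $L$ to be a paddable $\pspace$-complete language that, through a tag field, encodes membership in $L_0$, the ``$i$-th bit of the honest prover's next message on $(x,\tau,r)$'' predicate, and the analogous predicate for an interactive proof for $\overline{L_0}$ (which exists because $\pspace=\mathsf{IP}$ is closed under complement). A length-$n$ instance of $L$ is a tuple $\langle\mathrm{tag},x,\tau,r,i,1^m\rangle$ padded by $1^m$ so that the total length is exactly $n$.

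Second, on input $x$ of length $n$, $M$ simulates $V$ on $x$; whenever $V$ expects a message from the prover, $M$ reconstructs that message bit-by-bit by issuing appropriately padded oracle queries. With the true oracle $L$, these queries reveal the honest prover's bits exactly, so perfect completeness of the IP yields acceptance with probability $1$ whenever $x\in L$. If $x\notin L$, then by soundness no sequence of replies -- and hence no oracle whatsoever -- can cause $V$ to accept with probability exceeding $1/3$, so $M$ rejects with probability at least $2/3$ regardless of the oracle. The machine $M'$ is built identically, but simulates the interactive proof for $\overline{L}$ instead, reusing the same oracle $L$ through the complementary tag.

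The main obstacle is the padding bookkeeping: for each length $n$ and each query arising in the simulation, the un-padded tuple $\langle\mathrm{tag},x,\tau,r,i\rangle$ must fit within $n$ bits so that a $1^m$ suffix can bring the total length to exactly $n$, and the resulting language $L$ must remain $\pspace$-complete. These are routine length-counting and reducibility checks rather than conceptual difficulties; once they are in place, the three stated properties in the lemma follow immediately from perfect completeness of the IP, its soundness, and the length-restriction imposed on the oracle queries by the encoding.
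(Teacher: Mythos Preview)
The paper does not itself prove this lemma; it is quoted from Santhanam, who in turn relies on the Trevisan--Vadhan $\pspace$-complete language. So there is no in-paper argument to compare against directly, but your proposal diverges from that construction and has a genuine gap precisely at the step you label routine. Your language $L$ uses tags to distinguish ``$y\in L_0$'' instances from ``$i$-th prover bit on $(y,\tau,r)$'' instances, yet the lemma requires $M$ to decide $L$ correctly on \emph{every} length-$n$ input, including the prover-bit ones. To handle a prover-bit input with soundness against an arbitrary oracle, $M$ must itself run an interactive proof for that predicate; the prover messages of \emph{that} protocol are further $\pspace$ computations not covered by any existing tag, so you face an unbounded regress of tags. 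If instead you reduce the prover-bit instance back to $L_0$ and rerun Shamir's verifier, each pass through ``reduce to $L_0$, run the IP, form new prover-bit queries'' inflates the core by a fixed polynomial, and since the core of the original input can already have length close to $n$, the resulting query cores overflow $n$. No single choice of padding polynomial absorbs a compounding exponent, so this is not a bookkeeping issue but a structural one.

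The construction Santhanam actually invokes avoids both problems by taking $L$ to be the specific Trevisan--Vadhan $\pspace$-complete problem, which is simultaneously paddable, downward self-reducible, and random self-reducible \emph{at the same input length}. The checker $M$ walks the downward self-reduction tree and, at each node, certifies the oracle's answer via random self-reduction to same-length instances; padding then brings every query to length exactly $n$. Because each query is already an instance of $L$ itself---no auxiliary prover-bit predicate is ever introduced---the recursion closes and the length never grows.
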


We will use the complete language of Lemma \ref{thm:SanthPSPACECompleteLanguage} to define promise Arthur-Merlin problems. Let $L$, $M$ and $M^\prime$ be as in the definition of Lemma \ref{thm:SanthPSPACECompleteLanguage}. For every PSV circuit $C$ and input $x$, let $\Pr[M^{L(C)} = 1]$ denote the probability over $M$'s random bits that $M$ accepts when given the language of $C$ as an oracle. We will also make the following assumption on the behavior of $M$ (and $M^\prime$). If $C$ is undefined at some $x^\prime$, and $M$ queries its oracle for $x^\prime$, we will assume that the oracle returns a special symbol '?' and $M$ will immediately halt and reject. Define the promise problem $\Gamma^M = (\Gamma^M_Y, \Gamma^M_N)$ by
\begin{align*}
\Gamma^M_Y &= \{\langle x, C \rangle \, | \, C \text{ is a PSV circuit s.t. } \Pr[M^{L(C)})(x) = 1] \geq 2/3\}\\
\Gamma^M_N &= \{\langle x, C \rangle \, | \, C \text{ is a PSV circuit s.t. } \Pr[M^{L(C)}(x) = 0] \geq 2/3\}.
\end{align*}

In a similar manner, and with the same assumption on the behavior of $M^\prime$, define the promise problem $\Gamma^{M^\prime} = (\Gamma^{M^\prime}_Y, \Gamma^{M^\prime}_N)$ by
\begin{align*}
\Gamma^{M^\prime}_Y &= \{\langle x, C \rangle \, | \, C \text{ is a PSV circuit s.t. } \Pr[M^{\prime L(C)})(x) = 1] \geq 2/3\}\\
\Gamma^{M^\prime}_N &= \{\langle x, C \rangle \, | \, C \text{ is a PSV circuit s.t. } \Pr[M^{\prime L(C)}(x) = 0] \geq 2/3\}.
\end{align*}

\begin{lemma}\label{lemma:isPromiseProblem}
Let $L$, $M$ and $M^\prime$ be as in the definition of Lemma \ref{thm:SanthPSPACECompleteLanguage}. Let $\Gamma^M$ and $\Gamma^{M^\prime}$ be the promise problems defined above. Then $\Gamma^M$ and $\Gamma^{M^\prime}$ are in $\prAM$.
\end{lemma}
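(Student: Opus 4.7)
The plan is to exhibit, for $\Gamma^M$, an explicit two-message Arthur-Merlin protocol with the required completeness and soundness; the argument for $\Gamma^{M^\prime}$ is identical with $M^\prime$ in place of $M$. The idea is simply to let Arthur sample $M$'s random tape, hand it to Merlin, and have Merlin certify each oracle query $M$ would make along that computation path by producing a witness that $C$ is defined at the query with the claimed value.

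Concretely, Arthur first draws a uniform random string $\rho$ whose length matches the number of random bits $M$ uses on input $x$ and sends $\rho$ to Merlin. Merlin then simulates $M^{L(C)}$ on $\langle x, \rho \rangle$, collects the sequence of oracle queries $q_1, \dots, q_t$ that $M$ issues, and for each $q_i$ returns a string $w_i$ chosen so that $C(q_i, w_i)$ has flag gate equal to $1$ (if such a witness exists). In the verification phase Arthur re-simulates $M$ on $\langle x, \rho \rangle$ step by step, and whenever $M$ issues the $i$th query $q_i$, Arthur evaluates $C(q_i, w_i)$: if the flag output is $1$, Arthur feeds $M$ the value output as the oracle reply; otherwise Arthur halts and rejects. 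Arthur accepts iff the simulated $M$ accepts.

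The key observation is that since $C$ is PSV, whenever $C(q_i, w_i)$ has flag $1$ the value gate is forced to equal $L(C)(q_i)$, so any valid witness from Merlin amounts to a truthful oracle response. For completeness, on a yes-instance $\Pr_\rho[M^{L(C)}(x)=1] \geq 2/3$, and by the convention that $M$ immediately rejects on any $?$ response, none of $M$'s accepting branches ever queries a point where $C$ is undefined; hence Merlin can supply a valid witness along every accepting branch, and Arthur accepts whenever $M$ does. For soundness, observe that Merlin's only freedom is in the choice of each $w_i$: a witness with flag $0$ causes immediate rejection, while any witness with flag $1$ reproduces the oracle answer prescribed by $L(C)$. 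Thus on a no-instance Arthur's acceptance probability is bounded above by $\Pr_\rho[M^{L(C)}(x)=1] \leq 1/3$.

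The only non-routine point is the PSV consistency argument that prevents Merlin from shifting oracle answers undetected; once that is noted, the protocol is a standard simulation running in polynomial time with polynomial-length Merlin message, so I do not anticipate a real obstacle beyond writing out these verifications carefully.
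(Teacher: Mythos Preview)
Your proposal is correct and follows essentially the same approach as the paper: Arthur sends $M$'s random coins, Merlin replies with a witness $w_i$ for each oracle query, and Arthur accepts iff every $C(q_i,w_i)$ has flag $1$ and the resulting simulation of $M$ accepts. You have in fact written out the completeness and soundness arguments (using the PSV guarantee on $C$ from the promise and the convention that $M$ rejects on a `?' response) more carefully than the paper, which simply asserts that membership in $\prAM$ is clear from the definitions.
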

\begin{proof}
We give the Arthur-Merlin protocol for $\Gamma^M$. The protocol for $\Gamma^{M^\prime}$ is identical. On input $\langle x, C \rangle$, the Arthur-Merlin protocol works as follows. Arthur guesses a random string $r$, and sends $r$ to Merlin. Merlin responds with a sequence of witnesses $w_1,\ldots, w_{p(n)}$. Arthur then simulates $M$ with $L(C)$ as its oracle by using the provided witnesses. That is, for every query $q_j$, Arthur simulates $C(q_j, w_j)$. If for any $j$, $C(q_j, w_j)$ does not have a $1$ at its flag gate, Arthur immediately halts and rejects. Otherwise, Arthur uses the value of $C(q_j, w_j)$ as the oracle response and continues.  

From the definition of $M$ and $M^\prime$, it is clear that $\Gamma^M$ and $\Gamma^{M^\prime}$ are in $\prAM$.
\end{proof}

The usefulness of the $\coNP$ verifier in the definition of augmented $\AM$ protocols is that it allows for us to simulate interactive proof protocols. In the deterministic circuit setting, we are able to prove that $\pspace \subseteq \Ppoly$ implies that $\pspace = \MA$. This follows from the fact that Merlin can send Arthur a Boolean circuit claiming to compute the provers strategy, and Arthur simply simulates the interactive proof protocol using this circuit as the oracle. In the non-deterministic setting, however, this method breaks down. The essential difficulty is that Arthur cannot know if the non-deterministic circuit returns ``no" on every path, or just the one Merlin gives. The inclusion of a $\coNP$ verifier allows the proof for deterministic circuits to extend to the non-deterministic setting. Using this strategy, Aydinlio\u{g}lu and van Melkebeek \cite{AvM12} proved the following Lemma.
\begin{theorem}\label{thm:AvM}
If $\pspace \subseteq \NPpoly$, then $\pspace \subseteq \augAM$.
\end{theorem}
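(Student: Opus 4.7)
The plan is to exploit the $\coNP$ verifier in $\augAM$ precisely to force Merlin's non-deterministic ``circuit witness'' for $L$ to be single-valued, so that it can be safely plugged in as an oracle for the interactive-proof simulation described in Lemma~\ref{lemma:isPromiseProblem}.

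First, I would use the hypothesis together with closure of $\pspace$ under complement: since $\pspace \subseteq \NPpoly$, both $L$ and $\overline{L}$ lie in $\nsize(poly)$, so by the observation in Section~2.1, $L \in \svsize(poly)$. Fix a constant $c$ and a family of $n^c$-size TSV circuits $\{C_n\}$ computing $L_{=n}$. This is the object Merlin will guess.

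Next I would define the $\augAM$ protocol for $L$. On input $x \in \sstn$, let the witness $y$ be the description of a PSV circuit $C$ of size $n^c$. Take $\Gamma = \Gamma^M$ from Lemma~\ref{lemma:isPromiseProblem} (which is in $\prAM$) as the Arthur--Merlin component, and define the $\coNP$ component $V$ by
\begin{align*}
\langle x, C \rangle \in V \iff \forall\, q \in \sstn, \forall\, y_1, y_2 :\ \text{if } C(q,y_1) \text{ and } C(q,y_2) \text{ both raise flag, then their values agree.}
\end{align*}
The universal quantifiers range over polynomially many bits, so $V \in \coNP$. For completeness, if $x \in L$ take $C = C_n$: then $C$ is a TSV circuit, hence trivially PSV, so $\langle x, C \rangle \in V$; and $L(C) = L_{=n}$, so by property~2 of Lemma~\ref{thm:SanthPSPACECompleteLanguage} the machine $M^{L(C)}(x)$ accepts with probability $1$, placing $\langle x, C \rangle$ in $\Gamma^M_Y$. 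For soundness, if $x \notin L$ and Merlin sends any $C$: if $C$ fails the PSV condition then $\langle x, C \rangle \notin V$; otherwise $C$ is PSV and $L(C)$ is a well-defined partial oracle (with the convention that missing flags cause $M$ to reject), so property~3 of Lemma~\ref{thm:SanthPSPACECompleteLanguage} guarantees $M^{L(C)}(x)$ rejects with probability at least $2/3$, placing $\langle x, C \rangle$ in $\Gamma^M_N$. This establishes $L \in \augAM$, and since $L$ is $\pspace$-complete and $\augAM$ is closed under polynomial-time many-one reductions, $\pspace \subseteq \augAM$.

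The conceptual obstacle is exactly the one flagged in the discussion preceding the theorem: a plain non-deterministic circuit for $L$ cannot be used by Arthur as an oracle, because a ``reject'' answer from Merlin's witness does not rule out an accepting path. The resolution is the single-valuedness enforced by the $\coNP$ check $V$, which guarantees that \emph{every} witness that Merlin supplies for a given query returns the same value, and that this value (when a flag is raised) coincides with the value produced by the honest TSV circuit $C_n$. The only routine verifications left are that $V$ as defined really sits in $\coNP$, that the witness length $n^c$ matches the format required by the definition of $\praugAM$, and that the standard padding/reduction arguments lift $L \in \augAM$ to all of $\pspace \subseteq \augAM$.
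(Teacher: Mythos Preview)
Your proof is correct and follows precisely the strategy the paper sketches in the paragraph preceding Theorem~\ref{thm:AvM}; the paper itself does not give a self-contained proof here but cites \cite{AvM12}. Note, however, that the same protocol (Merlin sends a circuit, the $\coNP$ verifier checks PSV-ness, and the $\AM$ phase runs $\Gamma^M$) is spelled out almost verbatim inside the proof of Theorem~\ref{thm:fixedLowerAugAM}, so your argument is essentially the paper's intended one.
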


For the sake of clarity, we will break the proof of our main theorem into two parts. The first part uses Santhanam's technique \cite{Santhanam09} to show that augmented Arthur-Merlin protocols with one bit of advice do not have fixed polynomial size \textit{SV-circuits}. We then modify this proof slightly to achieve the stronger statement, that class $(\augAM \cap \coaugAM)/1$ does not have fixed size \textit{non-deterministic circuits}.

\begin{theorem}\label{thm:fixedLowerAugAM}
For every $k \in \N$ there is a language $A \in \augAM/1$ such that $L \notin \svsize(n^k)$.
\end{theorem}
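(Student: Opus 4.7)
My plan is to adapt Santhanam's technique \cite{Santhanam09} to the augmented setting using the promise problems $\Gamma^M$ and $\Gamma^{M'}$ of Lemma \ref{lemma:isPromiseProblem}. First, by a counting/diagonalization argument carried out inside $\pspace$, fix a hard language $B \in \pspace$ such that for every $n$, $B_{=n}$ requires non-deterministic circuit size $> n^{k+1}$ (and hence SV size $> n^{k+1}$). Since $L$ is $\pspace$-complete, fix a polynomial-time many-one reduction $r$ from $B$ to $L$ with $|r(x)| = p(|x|)$ for a polynomial $p$, and define the advice bit $h(n) = 1$ if $L_{=p(n)}$ has a PSV circuit of size at most $p(n)^{k+2}$, and $h(n) = 0$ otherwise.

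The heart of the argument is the $\augAM$ protocol for $B$ when $h(n) = 1$: set $A_{=n} = B_{=n}$. On input $x$ with advice $1$, Merlin sends a PSV circuit $C$ for $L_{=p(n)}$ along with the Merlin-witnesses $w_1, \dots, w_{q(n)}$ required by the AM protocol for $\Gamma^M$ (or $\Gamma^{M'}$) from Lemma \ref{lemma:isPromiseProblem}. The $\coNP$ verifier $V$ checks that $C$ is single-valued -- a $\coNP$ condition directly from the PSV definition. Arthur then simulates $M^{L(C)}(r(x))$ using the witnesses $w_j$ to answer each oracle query; since $C$ correctly computes $L_{=p(n)}$, the simulation agrees with $M^L$ and by the guarantees of Lemma \ref{thm:SanthPSPACECompleteLanguage} decides $x \in B$ with the required completeness and soundness. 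Dually, $M'$ with $\Gamma^{M'}$ covers the complementary side, so $B_{=n} \in \augAM$ whenever $h(n) = 1$.

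When $h(n) = 0$ no small PSV circuit for $L_{=p(n)}$ exists and the above protocol fails, so I would use a different choice of $A_{=n}$ (rather than $B_{=n}$) in this regime: any language that is in $\augAM$ at length $n$ and hard against size-$n^k$ SV circuits will do. My plan here is to exploit the hardness of $L$ itself via Theorem \ref{thm:nondeterPSG}: the truth table of $L$ at a suitable logarithmic length yields a $\mathit{SAT}$-oracle-hard pseudorandom generator which derandomizes $\prAM$ at input length $n$ into $\NP$, collapsing $\augAM$ at length $n$ into $\SigmaP$; a Kannan-style $\SigmaP$ lower bound then supplies a hard language that fits into $\augAM$ with advice $0$.

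The main obstacle I anticipate is the $h(n) = 0$ case: implementing the derandomization-based protocol uniformly in $n$ within the same augmented AM machine that handles the $h(n) = 1$ branch, using only one bit of advice. A cleaner alternative, which avoids mixing two protocols, is to replace the per-length split with a \emph{global} dichotomy on whether $\pspace \subseteq \NPpoly$: in the positive case Theorem \ref{thm:AvM} gives $B \in \pspace \subseteq \augAM$ directly; in the negative case the derandomization argument above yields a hard $\SigmaP$ language inside $\augAM$, and the single advice bit simply encodes which of the two global cases obtains.
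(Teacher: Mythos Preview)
Your $h(n)=1$ branch is fine and matches the paper's protocol: Merlin sends a PSV circuit for $L$, the $\coNP$ verifier checks single-valuedness, and Arthur runs $\Gamma^M$. The gap is entirely in the $h(n)=0$ branch.

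The derandomization you propose does not go through. The condition $h(n)=0$ says $L_{=p(n)}$ has no PSV circuit of size $p(n)^{k+2}$; this is hardness at \emph{polynomial} input length. Theorem~\ref{thm:nondeterPSG} needs the truth table of $L$ at input length $c\log n$ to be $SAT$-oracle hard, i.e.\ hardness at \emph{logarithmic} input length. Nothing links the two, so you cannot build the PRG from $h(n)=0$. Your global-dichotomy variant has the same defect: $\pspace \nsubseteq \NPpoly$ only gives superpolynomial hardness infinitely often at unspecified lengths, which yields at best infinitely-often subexponential derandomization of $\prAM$, not a collapse to $\SigmaP$ at every length. And even granting such a collapse, the ``Kannan-style $\SigmaP$ lower bound'' you invoke is against \emph{SV} circuits; that is not known for $\SigmaP$ (the paper itself notes the smallest class known is $\mathsf{S}_2\mathsf{P}^{\NP}$), so this step would already subsume the theorem you are trying to prove.

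What you are missing is the padding trick, which is the actual content of Santhanam's method. The paper does use a global dichotomy, but only on whether $\pspace \subseteq \svsize(poly)$; if yes, Theorem~\ref{thm:AvM} finishes. If no, one does \emph{not} try to derandomize. Instead one pads: set
\[
A=\{\,x1^y : x\in L,\ y \text{ a power of }2,\ (|x|+y)^{k+1}\le Min(L_{|x|}) < (|x|+2y)^{k+1}\,\},
\]
where $Min(L_m)$ is the minimum SV-circuit size of $L_{=m}$. The advice bit at length $m$ just says whether $m$ admits such a decomposition; if not, reject. When the bit is $1$, a TSV circuit for $L_{=|x|}$ of size $<(|x|+2y)^{k+1}=\mathrm{poly}(m)$ is guaranteed to exist, so Merlin can send it and your $h(n)=1$ protocol runs. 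For the lower bound, any size-$m^k$ SV circuit for $A_{=m}$ would, by hardwiring $y$, give an SV circuit for $L_{=|x|}$ of size $m^k=(|x|+y)^k<Min(L_{|x|})$, a contradiction on the infinitely many $|x|$ where $Min(L_{|x|})>(|x|+1)^{k+1}$. The padding simultaneously makes the ``small circuit for $L$'' exist (relative to the padded length) and makes $n^k$ circuits for $A$ too small for $L$; this is what replaces your $h(n)=0$ case entirely.
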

\begin{proof}
First assume that $\pspace$ has polynomial size $SV$ circuits. Then, by Theorem \ref{thm:AvM}, $\mathsf{PSPACE} \subseteq \augAM$, and the conclusion follows. So we may assume that $\pspace \nsubseteq \svsize(poly)$. 

Let $k \in \N$ and $L$ be the $\pspace$-complete language of Lemma \ref{thm:SanthPSPACECompleteLanguage}. By our assumption, $L \notin \svsize(poly)$. For every $n \in \mathbb{N}$, define the $Min(L_n) \in \mathbb{N}$ to be the size of the smallest $SV$-circuit computing $L_{=n}$. Define the language $A$ by 
\begin{center}
$A = \{x1^y \, | \, x \in L, \, 0< |x| \leq y, \; y \text{ is a power of 2 and } (y + |x|)^{k+1} \leq Min(L_n) < (2y + |x|)^{k+1}\}.$
\end{center}
We first show that $A \in \augAM/1$. Define the $\augAM$ protocol with one bit of advice as follows. On input $w$, if the advice is set to $0$, Arthur halts and rejects. If the advice bit is set to $1$, Arthur verifies that $w = x1^y$, where $0 < |x| \leq y$ and $y$ is a power of $2$. If the input is not of this form, Arthur halts and rejects. Otherwise, if the input is of the correct form, Merlin sends a non-deterministic circuit $C_L$ claiming to compute $L_{=n}$ to both verifiers. The $\coNP$ verifier $V$ checks that $C_L$ is a PSV circuit. That is, $V$ checks that for every string $x^\prime$ and every two witnesses $w_1, w_2$, $C_L(x^\prime, w_1) = C_L(x^\prime, w_2)$ whenever the flag gates of both are set to $1$. It is clear that this can be done in $\coNP$. For the Arthur-Merlin phase, we run the protocol $\Gamma^M$ of Lemma \ref{lemma:isPromiseProblem}. 

We now show that this protocol correctly decides $A$ given correct advice. First assume that $w = x1^y \in A$, so $x \in L$. Then there is a TSV-circuit $C$ of size $s$, where $(y + |x|)^{k+1} \leq s < (2y + |x|)^{k+1}$. When Merlin gives both verifiers this circuit, the $\coNP$ verifier $V$ will accept. Since $C$ computes $L_{=n}$, by the property of the probabilistic TM $M$ of Lemma \ref{thm:SanthPSPACECompleteLanguage}, $M^{L(C)}(x)$ accepts with probability $1$. Therefore $\langle x, C \rangle \in \Gamma^{M}_Y$, and the protocol accepts. 

Assume that $w = x1^y \notin A$. If $y$ is not of the correct form, then, given the correct advice, the above protocol immediately rejects. If $y$ is of the correct form, then $x \notin L$. Let $C$ be a circuit of size $s$, where $(y + |x|)^{k+1} \leq s < (2y + |x|)^{k+1}$. If $C$ is not PSV, then the $\coNP$ verifier $V$ will reject and the protocol is correct. Otherwise, $C$ is a PSV circuit. By the property of the probabilistic TM $M$ of Lemma \ref{thm:SanthPSPACECompleteLanguage}, $M^{L(C)}(x)$ must reject with probability at least $2/3$. Hence $\langle x, C \rangle \in \Gamma^{M}_N$. Since $C$ was arbitrary, the protocol correctly decides $A$.

We now prove that $A$ does not have SV non-deterministic circuits of size $n^k$. Assume otherwise, and let $C_1, C_2,\ldots$ be a sequence of SV non-deterministic circuits such that $C_m$ decides $A_{=m}$ and $C_m$ is of size $m^k$. Let $s(m)$ be the minimum circuit size of $L_{=m}$. By our assumption, there is an infinite number of input lengths $m$ such that $s(m) > (m+1)^{k+1}$. For any such $m$, define the following circuit $C^\prime_m$ deciding $L_{=m}$. First, the unique value $y$ such that $y$ is a power of $2$ and $(m+y)^{k+1} \leq s(m) < (m+2y)^{k+1}$ is hardcoded into $C^\prime_m$. On input $x$ of length $m$, $C^\prime_m$ simulates $C_{m+y}(x1^y)$. Since the size of $C^\prime_m$ is at most the size of $C_{m + y}$, we have that the size of $C^\prime_m$ is less than $s(m)$. This contradicts our assumption, and the proof is complete.
\end{proof}

We now modify the proof of Theorem \ref{thm:fixedLowerAugAM} slightly to achieve the following theorem.
\begin{theorem}\label{thm:fixedSizeStrong}
For every $k \in \N$ there is a language $A \in (\coaugAM \cap \augAM)/1$ such that $A \notin \nsize(n^k)$.
\end{theorem}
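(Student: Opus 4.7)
The plan is to modify the language $A$ from Theorem \ref{thm:fixedLowerAugAM} so that a single nondeterministic circuit for $A$ yields \emph{both} a nondeterministic circuit for $L_{=n}$ and one for $\overline{L}_{=n}$; the observation that $L,\overline{L} \in \nsize(poly)$ forces $L \in \svsize(poly)$ then converts this $\nsize(n^k)$ upper bound on $A$ into an SV-size upper bound on $L$, reproducing the contradiction that drove Theorem \ref{thm:fixedLowerAugAM}. The easy case first: if $\pspace \subseteq \nsize(poly)$, then because $\pspace$ is closed under complement the cited observation gives $\pspace \subseteq \svsize(poly)$, so Theorem \ref{thm:AvM} yields $\pspace \subseteq \augAM$ and, by complementation, $\pspace \subseteq \coaugAM$; a standard diagonalization produces a $\pspace$ language not in $\nsize(n^k)$. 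Hence I may assume $\pspace \not\subseteq \nsize(poly)$ and take $L$ to be the $\pspace$-complete language of Lemma \ref{thm:SanthPSPACECompleteLanguage}, for which $L \notin \nsize(poly)$ and therefore $L \notin \svsize(poly)$.

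With $Min(L_n)$ denoting the minimum SV-circuit size for $L_{=n}$, I define
\begin{align*}
A = \{\, xb1^y \,:\, & b \in \{0,1\},\ 0 < |x| \leq y,\ y \text{ is a power of }2, \\
& (|x|+1+y)^{k+1} \leq Min(L_{|x|}) < (|x|+1+2y)^{k+1},\ L(x) = 1-b \,\}.
\end{align*}
The new ingredient is the bit $b$: a string $x01^y$ encodes $x \in L$, while $x11^y$ encodes $x \notin L$, so both reductions feed through $A$ rather than one feeding through $A$ and the other through $\overline{A}$.

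To show $A \in (\augAM \cap \coaugAM)/1$ with shared advice, I set the advice at length $m$ to $1$ iff $A_{=m}$ is nonempty. The $\augAM$ protocol for $A$, on advice $1$, parses $w = xb1^y$, verifies the syntactic conditions on $y$, and runs $\Gamma^M$ when $b = 0$ or $\Gamma^{M'}$ when $b = 1$; Merlin sends a PSV circuit for $L_{=|x|}$ of size less than $(|x|+1+2y)^{k+1}$, the $\coNP$ verifier checks the PSV property exactly as in Theorem \ref{thm:fixedLowerAugAM}, and correctness follows from Lemmas \ref{thm:SanthPSPACECompleteLanguage} and \ref{lemma:isPromiseProblem}. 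On advice $0$ the protocol rejects. Dually, the $\coaugAM$ protocol accepts everything on advice $0$; on advice $1$ it accepts inputs failing the syntactic form and otherwise runs $\Gamma^{M'}$ when $b = 0$ or $\Gamma^M$ when $b = 1$. The same one-bit advice sequence works for both protocols.

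For the lower bound, suppose $A \in \nsize(n^k)$ via circuits $\{C_m\}$. Because $L \notin \svsize(poly)$, there are infinitely many $n$ for which $Min(L_n)$ is large enough that a unique power of $2$ $y \geq n$ satisfies $(n+1+y)^{k+1} \leq Min(L_n) < (n+1+2y)^{k+1}$; set $m = n+1+y$. Hardwiring the trailing bits in $C_m$ produces $C'_n(x) := C_m(x01^y)$, a nondeterministic circuit of size at most $m^k$ computing $L_{=n}$, together with $C''_n(x) := C_m(x11^y)$, a nondeterministic circuit of size at most $m^k$ computing $\overline{L}_{=n}$. Combining them into a TSV circuit as in the cited observation gives an SV circuit for $L_{=n}$ of size $2m^k + O(1)$, which for all sufficiently large $n$ is strictly less than $m^{k+1} \leq Min(L_n)$, contradicting minimality. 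The main obstacle relative to Theorem \ref{thm:fixedLowerAugAM} is that the assumption $A \in \nsize(n^k)$ alone only provides a nondeterministic circuit for $L_{=n}$, not an SV one; the bit $b$ is introduced precisely to force out a nondeterministic circuit for $\overline{L}_{=n}$ from the same assumption, after which the closure observation closes the argument.
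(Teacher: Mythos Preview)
Your proof is correct, and it takes a genuinely different route from the paper's. The paper keeps the language $A$ from Theorem~\ref{thm:fixedLowerAugAM} unchanged, shows that both $A$ and $\overline{A}$ lie in $\augAM/1$ with the \emph{same} advice bit (the latter via the protocol $\Gamma^{M'}$), and thereby places $A$ in $(\augAM\cap\coaugAM)/1$ with the lower bound $A\notin\svsize(n^k)$ inherited verbatim from Theorem~\ref{thm:fixedLowerAugAM}. It then lifts this to an $\nsize$ lower bound by a class-level argument: assuming $(\augAM\cap\coaugAM)/1\subseteq\nsize(n^c)$ and using closure of the class under complement, every language in it (and its complement) has $n^c$-size nondeterministic circuits, hence $O(n^c)$-size TSV circuits, contradicting the SV lower bound for suitably large $k$.

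Your approach instead bakes the complement into the language via the selector bit $b$, so that a \emph{single} assumed $\nsize(n^k)$ family for $A$ already yields nondeterministic circuits for both $L_{=n}$ and $\overline{L}_{=n}$ at the same padded length, and the TSV combination contradicts $Min(L_n)$ directly. This is more self-contained: you get the $\nsize$ lower bound for a specific language without the detour through an SV lower bound for the whole class, and the contradiction is local (one $n$, one $m$) rather than invoking closure properties. The paper's route, on the other hand, reuses Theorem~\ref{thm:fixedLowerAugAM} as a black box and keeps the language simpler; the extra work is isolated in the final closure step. Both arguments hinge on the same observation that $\nsize$ upper bounds on a language and its complement collapse to an $\svsize$ upper bound, but you apply it at the level of $L$ while the paper applies it at the level of $A$.
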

\begin{proof}
The proof is similar to that of Theorem \ref{thm:fixedLowerAugAM}. If $\pspace$ has polynomial size SV circuits, then by Theorem \ref{thm:AvM}, $\pspace = \coaugAM \cap \augAM$, and the claim follows. 

Assume that $\pspace \nsubseteq \svsize(poly)$. Let $k \in \N$, and $L$, $\overline{L}$ be the $\pspace$ complete languages of Lemma \ref{thm:SanthPSPACECompleteLanguage}. For every $n \in \mathbb{N}$, define the $Min(L_n) \in \mathbb{N}$ to be the size of the smallest non-deterministic circuit computing $L_{=n}$. Recall the definition of language $A$,
\begin{center}
$A = \{x1^y \, | \, x \in \overline{L}, \, 0< |x| \leq y, \; y \text{ is a power of 2 and } (y + |x|)^{k+1} \leq Min(L_n) < (2y + |x|)^{k+1}\}.$
\end{center}
We will show that $\overline{A} \in \augAM/1$, where the single bit of advice is the same as the bit to compute $A$. If the advice is set to $0$, then Arthur accepts. If $y$ is not of the correct form, then Arthur accepts. Otherwise, Merlin will send a non-deterministic circuit $C$ to Arthur and the $\coNP$ verifier $V$. If $C$ is not a PSV circuit, then $V$ rejects. Otherwise, Arthur and Merlin run the protocol for $\Gamma^{M^\prime}$ on $\langle x, C \rangle$. Assume that $x1^y \in \overline{A}$. If $y$ is not of the correct form then the above protocol will accept given the correct advice. If $y$ is of the correct form then $x \in \overline{L}$. Therefore, by the property of $\overline{L}$ and $M^\prime$, there is a TSV circuit $C$ such that $\langle x, C \rangle \in \Gamma^{M^\prime}_Y$ and the protocol accepts. Assume that $x1^y \notin \overline{A}$, so $x \notin \overline{L}$. Then for every circuit $C$ Merlin gives to the verifiers, either $C$ is not PSV, and $V$ will reject, or $\langle x, C \rangle \in \Gamma^{M^\prime}_N$, and Arthur will reject. Hence $\overline{A} \in \augAM/1$. Finally, we note that the protocols for $A$ and $\overline{A}$ are given the \textit{same bit} of advice.

The proof that $\overline{A}$ does not have single-valued circuits of size $n^k$ is nearly identical to that of Theorem \ref{thm:fixedLowerAugAM}.

Therefore, for every $k \in \N$, there is a language $A \in (\coaugAM \cap \augAM)/1$ such that $A \notin \svsize(n^k)$. We now extend this to non-deterministic circuits. Assume that for some $c \in \N$, 
\begin{center}
$(\coaugAM \cap \augAM)/1 \subseteq \nsize(n^c)$.
\end{center}
It suffices to show that we can construct a TSV circuit of size $O(n^{c}$ computing any language in $(\coaugAM \cap \augAM)/1$ as follows. Let $A \in (\coaugAM \cap \augAM)/1$, and let $C_A$ and $C_{\overline{A}}$ be size $n^c$ non-deterministic circuits computing $A$ and $\overline{A}$, respectively. Define the circuit $C$ which, given $x \in \sstn$ and $y \in \{0,1\}^{n^c}$, simulates $C_A(x, y)$ and $C_{\overline{A}}(x, y)$. $C$ accepts if $C_A(x, y)$ accepts with its flag bit set, and rejects if $C_{\overline{A}}(x, y)$ accepts with its flag bit set. Then $C$ is a TSV circuit of size $O(n^c)$ computing $A$ and $\overline{A}$, a contradiction.
\end{proof}

Essentially the same proof as Theorem \ref{thm:fixedSizeStrong} shows that $\praugAM$ does not have fixed polynomial size non-deterministic circuits. 
\begin{theorem}\label{thm:fixedSizeStrongPromise}
For every $k \in \N$ there is a language $A \in (\prcoaugAM \cap \praugAM)$ such that $A \notin \nsize(n^k)$.
\end{theorem}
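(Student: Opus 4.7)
The plan is to adapt the proof of Theorem~\ref{thm:fixedSizeStrong} almost verbatim, with the single bit of advice folded into the promise.

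First, I would case-split on whether $\pspace \subseteq \svsize(poly)$. If yes, then by Theorem~\ref{thm:AvM} and closure of $\pspace$ under complement, $\pspace \subseteq \augAM \cap \coaugAM \subseteq \prcoaugAM \cap \praugAM$, and a standard diagonalization within $\pspace$ supplies a language beyond $\nsize(n^k)$. Otherwise I assume $\pspace \nsubseteq \svsize(poly)$, let $L$ be the $\pspace$-complete language of Lemma~\ref{thm:SanthPSPACECompleteLanguage}, and let $Min(L_n)$ denote the minimum size of a non-deterministic circuit deciding $L_{=n}$.

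Next, I would define the promise problem $A = (A_Y, A_N)$ as the natural two-sided split of the language used in Theorem~\ref{thm:fixedSizeStrong}:
\begin{align*}
A_Y &= \{x1^y \,|\, x \in \overline{L},\ 0 < |x| \leq y,\ y \text{ a power of } 2,\ (y+|x|)^{k+1} \leq Min(L_{|x|}) < (2y+|x|)^{k+1}\},\\
A_N &= \{x1^y \,|\, x \in L,\ \text{same form and gap conditions on } y\}.
\end{align*}
Inputs failing the form or gap conditions are placed outside the promise; this is exactly where the advice bit of Theorem~\ref{thm:fixedSizeStrong} gets absorbed. The $\praugAM$ protocols for $A$ and $\overline{A}$ are then identical to the ones used in that theorem: Merlin sends a circuit $C$ purporting to compute $L_{=|x|}$, the $\coNP$ verifier $V$ checks that $C$ is PSV, and Arthur and Merlin run $\Gamma^{M'}$ (resp.\ $\Gamma^M$) from Lemma~\ref{lemma:isPromiseProblem} on $\langle x, C \rangle$. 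Completeness on yes-instances follows from the upper side of the gap together with property~2 of Lemma~\ref{thm:SanthPSPACECompleteLanguage}; soundness on no-instances uses property~3, which holds regardless of oracle.

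For the lower bound I would proceed in two stages, mirroring Theorem~\ref{thm:fixedSizeStrong}. Stage one shows $A \notin \svsize(n^k)$: any language $B$ agreeing with $A$ and computed by TSV circuits of size $n^k$ would, by hardcoding the appropriate $y$, yield a TSV (and hence non-deterministic) circuit for $L_{=|x|}$ of size $(|x|+y)^k$ at every operative length, contradicting $Min(L_{|x|}) \geq (|x|+y)^{k+1}$. Stage two extends to $\nsize$ at the class level: if $\prcoaugAM \cap \praugAM \subseteq \nsize(n^c)$, then both $A$ and $\overline{A}$, which lie in the class by its closure under promise-complement, have non-deterministic circuits of size $n^c$, and combining flag/value outputs as in the last paragraph of Theorem~\ref{thm:fixedSizeStrong} produces a TSV circuit of size $O(n^c)$ correctly deciding $A$ on its promise. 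Choosing $c < k$ contradicts stage one, so the class is not $\nsize(n^k)$-bounded and a witness exists.

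The main obstacle, exactly as in Theorem~\ref{thm:fixedSizeStrong}, is that non-determinism is not complement-closed: the bare assumption ``$A \in \nsize(n^k)$'' does not by itself supply a non-deterministic circuit for $\overline{A}$, without which the flag/value combination that produces an SV circuit cannot be executed. The fix is precisely the class-level formulation: closure of $\prcoaugAM \cap \praugAM$ under promise-complement automatically provides non-deterministic circuits for $\overline{A}$ whenever the class is assumed to be in $\nsize(n^k)$, which is all that stage two needs.
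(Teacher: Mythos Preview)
Your proposal is correct and follows essentially the same approach as the paper: the paper's proof simply says to replace the advice bit of Theorem~\ref{thm:fixedSizeStrong} by a promise restricting to inputs $x1^y$ with $y$ of the correct form, and then appeals to the proof of Theorem~\ref{thm:fixedSizeStrong} for the rest. Your write-up fills in exactly those details---the case split on $\pspace \subseteq \svsize(poly)$, the promise absorbing the form and gap conditions, the same $\praugAM$/$\prcoaugAM$ protocols via $\Gamma^M$ and $\Gamma^{M'}$, and the two-stage lower bound culminating in the class-level complement trick---so your argument is a faithful expansion of the paper's sketch.
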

\begin{proof}
The proof is similar to that of Theorem \ref{thm:fixedSizeStrong}, except that we eliminate the one bit of advice through the use of a promise. Recall the language 
\begin{center}
$A = \{x1^y \, | \, x \in L, \, 0< |x| \leq y, \; y \text{ is a power of 2 and } (y + |x|)^{k+1} \leq Min(L_n) < (2y + |x|)^{k+1}\}.$
\end{center}
Our promise consists of all strings $x1^y$ such that $y$ is of the correct form. The remainder of the proof follows the proof of Theorem \ref{thm:fixedSizeStrong}.
\end{proof}

We are now able to prove our downward separation result for non-deterministic circuit size. Note that, with the infinitely often and almost everywhere reversed, the converse is true using standard arguments. That is, if $\SigmaSub$ does not have fixed polynomial size circuits almost everywhere, then $\SigmaE$ does not have polynomial size non-deterministic circuits infinitely often. We will need the following theorem due to Aydinlio\u{g}lu and van Melkebeek \cite{AvM12}.
\begin{theorem}[]\label{thm:AvMequiv}
The following are equivalent.
\begin{enumerate}
\item $\prAM \subseteq \Sigma_2\mathsf{TIME}(2^{n^\epsilon})/n^\epsilon$ for every constant $\epsilon > 0$.
\item $\SigmaE \nsubseteq$ i.o.$-\NPpoly$.
\end{enumerate}
\end{theorem}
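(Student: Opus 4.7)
The theorem is an equivalence between a derandomization statement and a non-deterministic circuit lower bound; I would prove each direction separately.

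For $(2) \Rightarrow (1)$ (hardness to randomness): I would scale the Klivans--van Melkebeek generator of Theorem \ref{thm:nondeterPSG}. Fix $\epsilon > 0$ and let $L \in \SigmaE$ witness $\SigmaE \nsubseteq$ i.o.-$\NPpoly$, so that for every polynomial $p$ and almost every $m$, $L_{=m}$ has non-deterministic (hence $SAT$-oracle) circuit complexity at least $p(m)$. Choose $m = n^\delta$ for a suitable $\delta < \epsilon$: then the truth table of $L_{=m}$ can be produced in $\SigmaE$-time, which is $2^{O(m)} \le 2^{n^\epsilon}$, and is simultaneously hard enough to seed a scaled version of the generator. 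Iterating over all seeds (of length $n^\epsilon$) and taking a majority vote of the $\prAM$ verifier places any $\prAM$ problem in $\Sigma_2\mathsf{TIME}(2^{n^\epsilon})/n^\epsilon$; the $n^\epsilon$ advice is used to hardwire the parameter $m$ and to certify the hardness of the particular truth table employed.

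For $(1) \Rightarrow (2)$ (derandomization to hardness): I would argue by contrapositive, assuming $\SigmaE \subseteq$ i.o.-$\NPpoly$ and deriving a contradiction with (1). The crucial step is a Karp--Lipton-style collapse into augmented Arthur-Merlin that extends Theorem \ref{thm:AvM} from $\pspace$ to $\SigmaE$: if the $\SigmaE$-complete language $K$ has a polynomial-size non-deterministic circuit on a given length, then Merlin sends the circuit $C$ to both verifiers; the $\coNP$ verifier checks that $C$ is PSV; and Arthur runs an interactive protocol for $K$ using $C$ as oracle to answer queries about the purported universal/existential witnesses. By the assumption, this collapse applies at infinitely many input lengths, so $K \in \augAM/1$ infinitely often, where the single advice bit flags a ``good'' length. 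Invoking (1)---and noting that composing a $\coNP$ verifier with a derandomized $\prAM$ computation stays inside $\Sigma_2$---yields $K \in \Sigma_2\mathsf{TIME}(2^{n^\epsilon})/O(n^\epsilon)$ infinitely often. Since $K$ is complete for $\SigmaE = \Sigma_2\mathsf{TIME}(2^{O(n)})$, this contradicts the $\Sigma_2$-time hierarchy with subpolynomial advice.

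The main obstacle lies in the $(1) \Rightarrow (2)$ direction, and in particular in carrying out the Karp--Lipton collapse for $\SigmaE$ (rather than just $\pspace$) using only non-deterministic circuits. The standard deterministic-circuit argument fails because Arthur cannot independently verify a ``no'' answer from a non-deterministic circuit supplied by Merlin; the augmented $\AM$ framework resolves this, but the interactive protocol to be simulated must be engineered so that every oracle query to the $\SigmaE$-complete language can be answered using $C$ via a PSV check that is accepted by the $\coNP$ verifier. A second, more mundane difficulty is the advice-bookkeeping: one must verify that the total advice remains strictly sub-polynomial relative to the $\Sigma_2$-time budget, so that the hierarchy theorem yields a clean contradiction.
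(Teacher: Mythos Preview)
The paper does not prove this theorem; it is quoted from Aydinlio\u{g}lu and van Melkebeek \cite{AvM12}, so there is no proof here to compare against. Evaluating your sketch on its own merits:

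Your $(2)\Rightarrow(1)$ direction is essentially the standard hardness-to-randomness argument via the Klivans--van Melkebeek generator (Theorem~\ref{thm:nondeterPSG}) and is correct in outline.

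Your $(1)\Rightarrow(2)$ direction has a genuine gap. You propose to extend the Karp--Lipton collapse of Theorem~\ref{thm:AvM} from $\pspace$ to $\SigmaE$, placing the $\SigmaE$-complete language $K$ into $\augAM/1$ at the infinitely many lengths where $K$ has small non-deterministic circuits. But $\augAM$ has \emph{polynomial-time} verifiers, whereas any interactive protocol for $K\in\Sigma_2\mathsf{TIME}(2^{O(n)})$ must handle witnesses of length $2^{O(n)}$; Arthur cannot even read them, let alone ``run an interactive protocol for $K$.'' Theorem~\ref{thm:AvM} works for $\pspace$ precisely because $\mathsf{IP}=\pspace$ supplies a protocol with a polynomial-time verifier and a $\pspace$-computable prover; no such protocol is available for $\SigmaE$. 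Without an interactive soundness check, the $\coNP$ verifier's PSV test only certifies that Merlin's circuit is single-valued, not that it computes $K_{=n}$; Merlin could send a PSV circuit for an arbitrary language and Arthur would have no way to detect this. Hence the collapse $K\in\text{i.o.-}\augAM/1$ is not established by your argument, and the intended contradiction with the $\Sigma_2$-time hierarchy does not follow. The actual Aydinlio\u{g}lu--van Melkebeek argument for this direction does not attempt to put $\SigmaE$ directly inside $\augAM$; it routes the hardness through $\pspace$ (where Theorem~\ref{thm:AvM} does apply) together with a more delicate translation to obtain almost-everywhere hardness in $\SigmaE$.
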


The following lemma is implicit in \cite{AvM12}, which we prove for completeness.
\begin{lemma}\label{lemma:amToAugAM}
$\prAM \subseteq \Sigma_2\mathsf{TIME}(2^{n^\epsilon})/n^\epsilon$ for every $\epsilon > 0$ if and only if $\praugAM \subseteq \Sigma_2\mathsf{TIME}(2^{n^\epsilon})/n^\epsilon$ for every $\epsilon > 0$.
\end{lemma}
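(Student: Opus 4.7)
The backward direction is immediate: every promise $\prAM$ problem is trivially a promise augmented $\AM$ problem, obtained by taking $y$ to be empty and $V$ the all-accepting $\coNP$ language. So any containment of $\praugAM$ in $\Sigma_2\mathsf{TIME}(2^{n^\epsilon})/n^\epsilon$ for all $\epsilon > 0$ pushes down to $\prAM$.

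For the forward direction, fix $\Pi \in \praugAM$, witnessed by a constant $c$, a promise problem $\Gamma \in \prAM$, and a language $V \in \coNP$ with $|y| = n^c$. Fix $\epsilon > 0$. The plan is to exhibit a $\Sigma_2$-computation that on input $x$ existentially guesses $y \in \{0,1\}^{n^c}$ and then in parallel verifies (a) that $\langle x, y\rangle \in \Gamma_Y$ via the assumed $\Sigma_2$-subexponential simulation of $\prAM$, and (b) that $\langle x, y\rangle \in V$ via the $\coNP$ universal quantifier. Applying the hypothesis with parameter $\epsilon' = \epsilon/(2c)$ to $\Gamma$ at input length $O(n^c)$ yields a $\Sigma_2\mathsf{TIME}(2^{n^{\epsilon/2}})/n^{\epsilon/2}$ procedure for $\Gamma$. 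Combining with the inner $\exists y$ and the outer $\forall w$ from $V$, both blocks of quantifiers merge into a single $\Sigma_2$ computation
\[
\exists\, y, u_1 \;\forall\, u_2, w \;\bigl[\phi(x,y,u_1,u_2) \wedge \psi(x,y,w)\bigr],
\]
running in time $2^{n^\epsilon}$ with $n^\epsilon$ bits of advice (the advice for the $\Gamma$-simulation, hardcoded unchanged).

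To verify correctness, I would check the two promise cases separately. If $x \in \Pi_Y$, pick the good $y$ with $\langle x,y\rangle \in \Gamma_Y \cap V$; then $\langle x,y\rangle$ is in the promise of $\Gamma$, so the $\Sigma_2$-simulation of $\Gamma$ accepts, and the $\coNP$-check succeeds, so the overall computation accepts. If $x \in \Pi_N$, then for every guessed $y$ the $\praugAM$ definition gives $\langle x,y\rangle \in \Gamma_N$ or $\langle x,y\rangle \notin V$. In the first case $\langle x,y\rangle$ is in-promise for $\Gamma$, so the $\Sigma_2$-simulation of $\Gamma$ correctly rejects; in the second case the $\forall w$ part falsifies $\psi$. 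Either way the conjunction fails, so the computation rejects.

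The main subtlety I expect is ensuring that out-of-promise inputs of $\Gamma$ cannot spuriously flip the decision in the $\Pi_N$ case. The $\Sigma_2$-simulation of $\Gamma$ is guaranteed correct only on in-promise inputs, and could in principle accept on any $\langle x,y\rangle$ outside $\Gamma_Y \cup \Gamma_N$. The key observation is that the definition of $\praugAM$ forces exactly what is needed: for $x \in \Pi_N$ and any $y$ with $\langle x,y\rangle \in V$, the disjunction compels $\langle x,y\rangle \in \Gamma_N$, which is in-promise, so no rogue behaviour of the $\Gamma$-simulation can be exploited. The remaining bookkeeping—checking that the total running time is $2^{n^\epsilon}$ and the advice length $n^\epsilon$ after the $n \mapsto n^c$ rescaling—is routine given the choice $\epsilon' = \epsilon/(2c)$.
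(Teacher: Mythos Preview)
Your proposal is correct and follows essentially the same approach as the paper: guess $y$, check $\langle x,y\rangle \in V$ via the $\coNP$/universal part, and run the assumed $\Sigma_2$-subexponential simulation of $\Gamma$ on $\langle x,y\rangle$ with rescaled parameter $\epsilon' = \epsilon/c$ (your $\epsilon/(2c)$ is just a more conservative choice). Your explicit discussion of why out-of-promise behavior of the $\Gamma$-simulation cannot hurt in the $\Pi_N$ case is a point the paper leaves implicit, but the argument is the same.
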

\begin{proof}
The backward direction is immediate. Let $\Pi = (\Pi_Y, \Pi_N)$ be a promise problem in $\praugAM$. Let $\Gamma = (\Gamma_Y, \Gamma_N)$ be the corresponding $\prAM$ problem for $\Pi$, and $V$ be the $\coNP$ problem for $\Pi$. Let $c \in \N$ be the constant for $\Pi$, and $\epsilon > 0$. By our assumption, there is a $\Sigma_2\mathsf{TIME}(2^{n^\epsilon / c})$ machine $M$ taking $n^{\epsilon / c}$ bits of advice which is consistent with the promise $\Gamma$. Define the $\Sigma_2\mathsf{TIME}(2^{n^\epsilon})$ machine $N$ taking $n^\epsilon$ bits of advice as follows. On input $x \in \sstn$, guess a string $y \in \{0,1\}^{n^c}$, and check if $\langle x, y \rangle \in V$ using the $\NP$ oracle. If it is not, reject. Otherwise, simulate $M$ on $\langle x, y \rangle$ with the given advice string. It is clear that $N$ is a $\Sigma_2\mathsf{TIME}(2^{n^\epsilon})$ time machine taking $n^\epsilon$ bits of advice. Assume that $x \in \Pi_Y$. Then there is a $y \in \{0,1\}^{n^c}$ such that $\langle x, y \rangle \in V$ and $\langle x, y \rangle \in \Gamma_Y$. Therefore, given the correct advice string $\alpha \in \{0,1\}^{n^\epsilon}$, $N$ accepts. Assume that $x \in \Pi_N$. Let $y \in \{0,1\}^{n^c}$ be any string guessed by $N$. If $\langle x, y \rangle \notin V$, then $N$ will reject. Otherwise, $\langle x, y \rangle \in \Pi_N$. Therefore, given the correct advice string $\alpha \in \{0,1\}^{n^\epsilon}$, $N$ rejects. 
\end{proof}

We are now able to prove the downward separation results for non-deterministic circuit size. First, we have the following ``low-end" separation.
\begin{theorem}
If $\SigmaE \nsubseteq$ i.o.-$\NPpoly$, then for every $k \in \N$, there is a language $A \in \SigmaSub$ such that $A \notin \nsize(n^k)$.
\end{theorem}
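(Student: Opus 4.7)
The plan is to chain together three results already established in this section. First, I would feed the hypothesis $\SigmaE \nsubseteq$ i.o.-$\NPpoly$ into Theorem \ref{thm:AvMequiv} to obtain $\prAM \subseteq \Sigma_2\mathsf{TIME}(2^{n^\epsilon})/n^\epsilon$ for every $\epsilon > 0$, and then apply Lemma \ref{lemma:amToAugAM} to promote this derandomization to $\praugAM \subseteq \Sigma_2\mathsf{TIME}(2^{n^\epsilon})/n^\epsilon$ for every $\epsilon > 0$. This is the purely derandomization-theoretic half of the argument.

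Next, I would fix $k \in \N$ and invoke Theorem \ref{thm:fixedSizeStrong} to pull out a language $A \in (\coaugAM \cap \augAM)/1$ with $A \notin \nsize(n^k)$. In particular $A$ lies in $\augAM/1$: there is an augmented Arthur--Merlin protocol that decides $A$ with one bit of advice. Each fixing of the advice bit converts this into an ordinary $\augAM$ (hence $\praugAM$) problem, which by the first step admits a $\Sigma_2\mathsf{TIME}(2^{n^\epsilon})/n^\epsilon$ simulation. Reinstating the single advice bit places $A$ in $\Sigma_2\mathsf{TIME}(2^{n^\epsilon})/(n^\epsilon + 1)$ for every $\epsilon > 0$, which after a harmless rescaling of $\epsilon$ sits inside $\SigmaSub$ (in the advice-aware sense inherited from Theorem \ref{thm:AvMequiv}). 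Combined with the separation $A \notin \nsize(n^k)$ supplied by Theorem \ref{thm:fixedSizeStrong}, this is exactly the conclusion.

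The main obstacle, if one can call it that, is purely advice bookkeeping: one has to track that the $n^\epsilon$ bits of advice from the $\prAM$ derandomization and the single additional bit from the $\augAM/1$ protocol can be amalgamated into an $n^\epsilon$ advice bound without inflating the $\Sigma_2$ running time beyond $\SigmaSub$. All the genuinely new technical ingredients---Santhanam's $\pspace$-completeness argument adapted to augmented Arthur--Merlin protocols in Theorem \ref{thm:fixedSizeStrong}, and the Aydinlio\u{g}lu--van Melkebeek characterization in Theorem \ref{thm:AvMequiv} together with its promise-AugAM lift in Lemma \ref{lemma:amToAugAM}---have been handled earlier, so the final theorem is essentially an assembly step that combines them.
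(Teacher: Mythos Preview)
Your overall plan---feed the hypothesis through Theorem~\ref{thm:AvMequiv} and Lemma~\ref{lemma:amToAugAM} to derandomize $\praugAM$, then combine with a fixed-polynomial lower bound from this section---is exactly the paper's strategy. But there is a real gap at the end: the theorem asks for a language in $\SigmaSub$, i.e., in $\Sigma_2\mathsf{TIME}(2^{n^\epsilon})$ \emph{without} advice, whereas you terminate with $A\in\Sigma_2\mathsf{TIME}(2^{n^\epsilon})/(n^\epsilon+1)$ and wave this off as ``advice bookkeeping.'' The paper does not read $\SigmaSub$ as an advice class; its proof explicitly removes the advice.

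The paper closes this gap with two moves you omit. First, it invokes the lower bound with slack: it takes $A\notin\nsize(n^{2k})$ rather than $n^k$ (and it uses Theorem~\ref{thm:fixedSizeStrongPromise}, the promise version, so there is no stray $/1$ bit to carry). Second, it folds the $n^\epsilon$ advice bits into the input, passing to
\[
A'=\{\langle x,\alpha\rangle : M\text{ accepts }x\text{ given advice }\alpha\in\{0,1\}^{n^\epsilon}\},
\]
which lies in advice-free $\Sigma_2\mathsf{TIME}(2^{n^\epsilon})$. If $A'$ had nondeterministic circuits of size $m^k$ on inputs of length $m=O(n+n^\epsilon)$, then hard-wiring the correct advice string would give $A$ circuits of size $(O(n+n^\epsilon))^k<n^{2k}$, contradicting the choice of $A$. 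With only $A\notin\nsize(n^k)$, as in your outline, this last inequality fails and the contradiction does not close. So both the advice-elimination trick and the extra factor in the exponent are essential parts of the argument, not bookkeeping.
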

\begin{proof}
Let $k \in \N$. If $\SigmaEXP \nsubseteq$ i.o.$-\NPpoly$, then by Theorem \ref{thm:AvMequiv} and Lemma \ref{lemma:amToAugAM}, $\praugAM \subseteq \Sigma_2\mathsf{TIME}(2^{n^\epsilon})/n^\epsilon$ for every $\epsilon > 0$. By Theorem \ref{thm:fixedSizeStrongPromise}, there is a language $A \in \praugAM$ such that $A \notin \svsize(n^{2k})$. Let $\epsilon > 0$, and let $M$ be the $\Sigma_2\mathsf{TIME}(2^{n^\epsilon})$ machine deciding $A$ given $n^\epsilon$ bits of advice. We can encode advice into the input as follows. Define the language
\begin{center}
$A^\prime = \{\langle x, \alpha \rangle \, | M \text{ accept } x \text{ given } \alpha \in \{0,1\}^{n^\epsilon} \text{ as advice}\}$.
\end{center}
It is clear that $A^\prime \in \Sigma_2\mathsf{TIME}(2^{n^\epsilon})$. For sufficiently large $n$, $(O(n + n^\epsilon))^k < n^{2k}$. We therefore have that $A^\prime \notin \nsize(n^k)$. As $k$ and $\epsilon$ were chosen arbitrarily, we see that $\SigmaSub \nsubseteq \nsize(n^k)$.
\end{proof}

Theorem \ref{thm:fixedSizeStrong} also implies that derandomizing $\prAM$ in $\SigmaP$ gives fixed polynomial size lower bounds for $\SigmaP$.
\begin{corollary}
If $\prAM \subseteq \SigmaP$, then $\SigmaP \nsubseteq \nsize(n^k)$ for any fixed $k \in \mathbb{N}$.
\end{corollary}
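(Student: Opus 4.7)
The plan is to import Theorem \ref{thm:fixedSizeStrongPromise} into $\SigmaP$ via a high-end, advice-free analog of Lemma \ref{lemma:amToAugAM}. Concretely, I will first establish that $\prAM \subseteq \SigmaP$ implies $\praugAM \subseteq \SigmaP$, and then observe that the language $A$ produced by Theorem \ref{thm:fixedSizeStrongPromise} lies in $\praugAM$, hence in $\SigmaP$, while escaping $\nsize(n^k)$.

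For the reduction, let $\Pi \in \praugAM$ with associated $\prAM$ problem $\Gamma$, $\coNP$ verifier $V$, and constant $c$ as in the definition of $\praugAM$. By hypothesis, there is a $\SigmaP$ machine $M$ consistent with $\Gamma$. I would define a $\SigmaP$ machine $N$ for $\Pi$ that, on input $x$, existentially guesses $y \in \{0,1\}^{n^c}$ and then, in parallel, (i) verifies $\langle x, y \rangle \in V$ using its universal phase and (ii) simulates $M$ on $\langle x, y \rangle$. Since $V \in \coNP$ contributes a $\forall$-quantifier while $M$'s acceptance predicate is already $\exists \forall$, the overall computation has the form $\exists y \exists u \, \forall z \forall v \, \phi(x, y, u, z, v)$ and thus collapses to $\SigmaP$. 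Soundness is immediate from the definition of $\praugAM$: a yes-instance $x$ has some $y$ satisfying both $\langle x, y \rangle \in V$ and $\langle x, y \rangle \in \Gamma_Y$, so $N$ accepts along this branch; for a no-instance $x$ and every $y$, either $\langle x, y \rangle \notin V$ (rejected by the $V$-check) or $\langle x, y \rangle \in \Gamma_N$ (rejected by $M$), so $N$ rejects.

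Given $\praugAM \subseteq \SigmaP$, I would apply Theorem \ref{thm:fixedSizeStrongPromise} with the specified $k$ to obtain a language $A \in \praugAM$ with $A \notin \nsize(n^k)$. By the reduction just described, $A \in \SigmaP$, so $\SigmaP \nsubseteq \nsize(n^k)$, as required.

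The only step requiring any thought is confirming that the Lemma \ref{lemma:amToAugAM}-style simulation transfers from $\Sigma_2\mathsf{TIME}(2^{n^\epsilon})/n^\epsilon$ down to $\SigmaP$ without advice. The structural argument there---guess $y$, verify $V$ with a universal quantifier, and run the $\Sigma_2$ decider for $\Gamma$ on $\langle x, y \rangle$---is resource-agnostic and consumes no advice bits when passing from $\prAM$ to $\praugAM$, so no additional machinery is needed. Consequently I do not anticipate a genuine obstacle; the corollary is essentially a direct consequence of Theorem \ref{thm:fixedSizeStrongPromise} combined with a polynomial-time instance of the augmentation argument.
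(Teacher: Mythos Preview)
Your proposal is correct and follows essentially the same route as the paper: both arguments deduce $\praugAM \subseteq \SigmaP$ from the hypothesis and then invoke Theorem~\ref{thm:fixedSizeStrongPromise}. You simply spell out the quantifier-merging step (the $\SigmaP$ analog of Lemma~\ref{lemma:amToAugAM}) that the paper leaves implicit in the one-line assertion ``Then $\praugAM \subseteq \SigmaP$.''
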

\begin{proof}
Assume that $\prAM \subseteq \SigmaP$. Then $\praugAM \subseteq \SigmaP$. By Theorem \ref{thm:fixedSizeStrongPromise}, $\praugAM \nsubseteq \nsize(n^k)$ for any fixed $k$, and the conclusion follows.
\end{proof}

\section{Mild Derandomization of Promise AM}

\begin{definition}
A $\Sigma_2$-\textit{promise problem} is a promise problem $\Gamma = (\Gamma_Y, \Gamma_N)$ such that there is a language $L \in \SigmaP$ which decides the promise $\Gamma_Y \cup \Gamma_N$. That is, for every length $n$ and all strings $x \in \sstn$,
\begin{center}
$x \in \Gamma_Y \cup \Gamma_N$ if and only if $x \in L$.
\end{center}
The class $\Sigma_2$-$\prAM$ consists of all $\Sigma_2$-promise problems in $\prAM$.
\end{definition}

A polynomial size \textit{hitting set} for a $\Sigma_2$-$\prAM$ problem $\Gamma = (\Gamma_Y, \Gamma_N)$ is a polynomial size set $S$ of $n^k$-bit strings that will take the role of Arthur in the AM protocol. Formally, $S$ is a hitting set if, for every $x \in \Gamma_Y \cup \Gamma_N)$,
\begin{align*}
x \in \Gamma_Y &\implies (\forall y \in S) (\exists z) \, R(x, y, z) = 1 \\
x \in \Gamma_N &\implies (\forall z) (\exists y \in S) \, R(x, y, z) = 0,
\end{align*}
where $R$ is a deterministic polynomial time computable relation for $\Gamma$. Note that we do not worry about the instances which are not in the promise $\Gamma$.

We use Williams easy hitting set technique to give a nontrivial derandomization of $\Sigma_2$-$\prAM$. This is an analog of Williams' result for $\AM$ \cite{Williams16}. We will consider hitting sets for $\Sigma_2$-$\prAM$ which are computable by polynomial size circuits with oracle access to $\NP$. There are two cases. Either there is a constant such that, for every problem $\Gamma$ in $\Sigma_2$-$\prAM$, there is a $n^c$ size hitting set for $\Gamma$ and $\Sigma_2$-$\prAM$ can be computed in $\Poly^{\NP}/O(n^c)$. Otherwise, for every $c$, there is a problem in $\Sigma_2$-$\prAM$ which has no small hitting sets. We can use this fact to find a string of high complexity, and use a pseudorandom generator to derandomize $\Sigma_2$-$\prAM$.
\begin{theorem}
At least one of the following holds.
\begin{enumerate}
\item There is a constant $c \in N$ such that $\Sigma_2$-$\prAM \subseteq \Poly^{\NP}/O(n^c)$.
\item $\prAM \subseteq \text{i.o.-}\Sigma_2\mathsf{TIME}(2^{n^{\epsilon}})/n^\epsilon$ for every $\epsilon > 0$.
\end{enumerate}
In particular, there is a constant $c$ such that 
\begin{center}
$\Sigma_2$-$\prAM \subseteq$ i.o.-$\SigmaSub/n^c$.
\end{center}
\end{theorem}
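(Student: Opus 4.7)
The plan is a dichotomy based on whether small $\NP$-oracle circuits produce hitting sets for every $\Sigma_2$-$\prAM$ problem. In Case 1, suppose some constant $c$ works for all of $\Sigma_2$-$\prAM$. I take the describing $\NP$-oracle circuit as $O(n^c)$ bits of advice. On input $x$, a $\Poly^{\NP}$ machine simulates the advised circuit (polynomial time with $\NP$ queries to resolve each oracle gate) to obtain hitting-set elements $y_1,\ldots,y_{n^c}$, then asks its $\NP$ oracle whether $\exists z\,R(x,y_i,z)=1$ for each $i$. It accepts iff every query returns yes. The hitting-set property gives correctness on promise instances (YES instances pass all queries; NO instances must fail some), and outside the promise the machine may output anything. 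This places $\Sigma_2$-$\prAM \subseteq \Poly^{\NP}/O(n^c)$, which is outcome (1).

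In Case 2, suppose instead that for every $c$ there is some $\Sigma_2$-$\prAM$ problem with no $n^c$-size hitting set computable by an $\NP$-oracle circuit. I aim to derive outcome (2). Fix $\epsilon>0$ and an arbitrary $\prAM$ problem $\Pi$; by Theorem \ref{thm:nondeterPSG} it suffices to produce, in $\Sigma_2\mathsf{TIME}(2^{n^\epsilon})$ with $n^\epsilon$ advice, infinitely often, a truth table $r$ of a $c\log n$-variable function whose $SAT$-oracle circuit complexity is at least $n^c$; then $G_r$ derandomizes $\Pi$. Following Williams, I encode the search for hard truth tables as a $\Sigma_2$-$\prAM$ problem whose hitting-set elements are exactly the $SAT$-hard tables (hardness is a $\coNP^{SAT}\subseteq\Pi_2$ check, keeping the auxiliary promise in $\SigmaP$). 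The Case 2 hypothesis at parameter $c$ then says no $n^c$-size $\NP$-oracle circuit produces a valid hitting set, so hard truth tables lie outside the union of the ranges of all such circuits. A $\Sigma_2\mathsf{TIME}(2^{n^\epsilon})$ machine can then guess $r$ as the lexicographically least string in a search window whose position is supplied as $n^\epsilon$ advice, and verify hardness in $\Pi_2$. The \emph{infinitely often} qualifier reflects that the hitting-set failure need only hold at infinitely many input lengths.

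The final ``in particular'' claim follows by combining the cases: Case 1 gives $\Sigma_2$-$\prAM \subseteq \Poly^{\NP}/O(n^c) \subseteq \SigmaSub/n^c$ outright, while Case 2 gives $\Sigma_2$-$\prAM \subseteq \prAM \subseteq \text{i.o.-}\SigmaSub/n^c$ (take $\epsilon=1$), so a single constant $c$ works in either case. The main technical obstacle is Case 2: packaging ``no small hitting set exists'' into an explicit $\Sigma_2\mathsf{TIME}(2^{n^\epsilon})$ procedure with only $n^\epsilon$ advice that actually outputs a truth table of the required hardness. Standard amplification of the underlying $\prAM$ relation---pushing YES-instance witness density close to $1$ so that random elements nearly always serve as hitting-set members---is the expected workaround, letting ``not producible by small $\NP$-oracle circuits'' upgrade to ``high $SAT$-oracle circuit complexity.''
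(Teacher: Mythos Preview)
Your Case 1 is correct and matches the paper. Case 2, however, has a genuine gap: you miss the central mechanism of Williams' argument as adapted here.

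The Case 2 hypothesis hands you a \emph{specific} $\Sigma_2$-$\prAM$ problem $\Gamma$ (with relation $R$) whose hitting sets, at infinitely many lengths, all have $SAT$-oracle circuit complexity at least $n^{c}$. You do not get to manufacture a new problem ``whose hitting-set elements are exactly the $SAT$-hard tables''; hitting-set elements are strings of Arthur's random coins, not truth tables, and there is no mechanism to force them to coincide with hard functions. The point is rather that any valid hitting set $S$ for $\Gamma$, viewed as a single concatenated string, is \emph{itself} the hard object: by hypothesis no small $SAT$-oracle circuit outputs a hitting set, so $S$ can serve directly as the seed $r$ in the Klivans--van Melkebeek generator of Theorem~\ref{thm:nondeterPSG}.

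What the paper actually does in Case 2 is therefore: on input $x$ of length $n$, existentially guess a candidate hitting set $S$ for $\Gamma$ at input length $n^\epsilon$, then \emph{verify that $S$ really is a hitting set}, and finally plug $S$ into the generator to derandomize $\Pi$ on $x$. The verification step is where the $\Sigma_2$-decidability of the promise and the $n^\epsilon$ advice are spent. The advice is simply $|\Gamma_Y \cap \{0,1\}^{n^\epsilon}|$. With this number in hand, the machine guesses a partition of $\{0,1\}^{n^\epsilon}$ into $U_Y, U_N, U_O$ with $|U_Y|$ matching the advice, checks that each string in $U_O$ lies outside the promise (decidable in $\SigmaP$ by the definition of $\Sigma_2$-$\prAM$), and checks via $R$ that $S$ behaves correctly on every string in $U_Y$ and $U_N$. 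The cardinality advice is exactly what prevents cheating by shuffling promise instances into $U_O$.

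Your proposed route---guess a lexicographically extremal $r$ in an advised ``search window'' and then ``verify hardness in $\Pi_2$''---does not work as stated. First, the advice content is unspecified and it is unclear why $n^\epsilon$ bits would pin down a hard $r$. Second, and more seriously, directly certifying that a string $r$ has no small $SAT$-oracle circuit is not a $\Pi_2$ predicate on its face: the inner evaluation $C^{SAT}(x)$ already sits at the $\Delta_2$ level, so the naive quantifier structure is $\forall C\,\exists x\,[\Delta_2]$, not $\coNP^{SAT}$. The paper sidesteps this entirely by never verifying hardness directly; hardness of $S$ follows for free from the Case~2 hypothesis once $S$ has been certified to be a hitting set.
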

\begin{proof}
First assume that there exists a constant $c \in \N$ such that, for every $\Sigma_2$-$\prAM$ promise problem $\Gamma$ there is a circuit with oracle access to $SAT$ of size $n^c$ computing a hitting set for $\Gamma$. Then $\Sigma_2$-$\prAM \subseteq \Poly^{\NP}/O(n^c)$. This follows, since the advice is simply the oracle circuit computing the hitting set. 

Otherwise, for every constant $c$, there is a $\Sigma_2$-$\prAM$ problem $\Gamma$ and a polynomial time relation $R$ such that, for infinitely many input lengths, every set hitting set for $\Gamma$ has circuit complexity at least $n^c$. We will show that this fact allows us to compute a string of hard $SAT$-oracle complexity. Once we have such a string, we use the pseudorandom generator of Klivans and van Melkebeek \cite {KvM02} to derandomize $\prAM$. Let $\Pi = (\Pi_Y, \Pi_N)$ be the $\prAM$ promise problem we wish to derandomize. Let $k \in \N$ be the number such that the number of random bits Arthur uses is at most $n^k$. Finally, let $\epsilon > 0$. We show how to compute $\Pi$ in $\Sigma_2\mathsf{TIME}(2^{n^{\epsilon}})/n^\epsilon$. Let $R$ be a polynomial time relation for a $\Sigma_2$-$\prAM$ problem $\Gamma$ such that, for infinitely many input lengths $n$, every hitting set of $\Gamma$ has circuit complexity of at least $n^{2k/\epsilon}$. On input $x \in \sstn$, the $\Sigma_2\mathsf{TIME}(2^{n^{\epsilon}})/n^\epsilon$ first guesses a hitting set $S$ for $\Gamma$ on inputs of size $n^\epsilon$. The advice is the cardinality of $\Gamma_Y$. The machine then guesses three sets of strings, $U_Y, U_N, U_O$ such that $|U_Y| = |\Gamma_Y|$, and $|U_Y| + |U_N| + |U_O| = 2^{n^\epsilon}$. For each string in $U_O$, the machine verifies that it is not in the promise $\Gamma$. By our assumption of the promise, this can be done in $\SigmaP$ time. For each string $x^\prime \in U_Y$, the machine uses its oracle to verify that for every $y \in S$, there is a $z$ such that $R(x^\prime, y, z)$ accepts. Finally, for each string $x^\prime \in U_N$, the machine verifies that there is some $y \in S$ such that every string $z$ satisfies $R(x^\prime, y, z) = 0$. Once the machine has verified each of these items, it then uses the guessed hitting set $S$ and a pseudorandom generator of \cite{KvM02} to derandomize $\Pi$, and accepts if and only if $x \in \Pi_Y$.

Therefore, we have that $\Sigma_2$-$\prAM$ is in either $\Poly^{\NP}/O(n^c)$ for some fixed constant $c$ or $\prAM$ is in $\Sigma_2\mathsf{TIME}(2^{n^{\epsilon}})/n^\epsilon$ for every $\epsilon > 0$, and the claim follows.
\end{proof}

\bibliographystyle{plain}
\bibliography{ComplexityBib}

\end{document}